\definecolor{AFcolor}{rgb}{0.1,0.7,0.1}
\definecolor{SHCcolor}{rgb}{0.9,0.1,0.1}
\newcommand{\ignore}[1]{}
\DeclareMathOperator{\interior}{int}
\DeclareMathOperator{\Ric}{Ric}
\def\@fpheader{~}\makeatother
\title{Conformal Rigidity from Focusing}
\preprint{MIT-CTP/5307}
\author[a]{{\AA}smund Folkestad}
\emailAdd{afolkest@mit.edu}
\author[b]{and Sergio Hern\'andez-Cuenca}
\emailAdd{sergiohc@ucsb.edu}
\affiliation[a]{Center for Theoretical Physics, Massachusetts Institute of Technology, Cambridge, MA 02139, USA}
\affiliation[b]{Department of Physics, University of California, Santa Barbara, CA 93106, USA}
\abstract{
The null curvature condition (NCC) is the requirement that the Ricci curvature of a Lorentzian manifold be nonnegative along null directions, which ensures the focusing of null geodesic congruences.
In this note, we show that the NCC together with the causal structure significantly constrain the metric.
In particular, we prove that any conformal rescaling of a spacetime with zero null curvature introduces either geodesic incompleteness or negative null curvature, provided the conformal factor is non-constant on at least one complete null geodesic. 
In the context of bulk reconstruction in AdS/CFT, our results combined with the technique of light-cone cuts can be used in spacetimes with zero null curvature to reconstruct the full metric in regions probed by complete null geodesics reaching the boundary.
For spacetimes with null curvature, our results constrain the conformal factor, giving an approximate reconstruction of the metric.
}
\begin{document}
 

\maketitle


\section{Introduction}\label{sec:intro}
Energy conditions are indispensable tools in the study of gravity. 
They enable the proof of important results which are true in theories describing nature, but not in Lorentzian
geometry in general. Among the most important is the null energy condition
(NEC), which underpins the famous area theorem by Hawking \cite{Hawking:1971tu}, the Penrose singularity theorem
\cite{Penrose:1964wq}, in addition to results related to holography
\cite{Gao:2000ga,Wall:2012uf, Headrick:2014cta,Bousso:2015mqa}, wormholes \cite{Morris:1988tu}, cosmic censorship
\cite{Chen:2020hjm, EngFol20}, topological censorship \cite{Friedman:1993ty}, and causality
\cite{Tipler:1976bi,Hawking:1991nk,Olum:1998mu,Visser:1998ua}.\footnote{See also e.g. \cite{Ford:1994bj,Graham:2007va,Hartman:2016lgu,Bousso:2015wca,Balakrishnan:2017bjg, Freivogel:2018gxj, Leichenauer:2018tnq} for generalizations of the NEC to weaker conditions more relevant to quantum fields.} The great utility of the NEC comes from that fact that it ensures the focusing of null geodesics, a physically motivated condition capturing the attractive nature of gravity.

In the Hawking area theorem, an assumption on the causal structure of spacetime together with the NEC is sufficient
to derive how areas of certain surfaces behave -- that is, properties of the underlying geometry. While the causal structure is invariant under Weyl transformations, areas are not. Thus, when causality is combined with the focusing of null geodesic (as implied by the NEC), we gain control over the geometry. Other remarkable examples of the interplay between causal structure and focusing in constraining geometry were found in \cite{May:2019odp, May:2021nrl, May:2021zyu, May:2021raz}. 
In these works, area comparison theorems for extremal surfaces in spacetimes with Anti-de Sitter (AdS) asymptotics
were derived from only causal relations and the null curvature condition (NCC), 
a property equivalent to the NEC when assuming the Einstein equation.
Together, these results hint at a pattern where
causal structure and energy conditions together constrain geometry to a substantial degree and in physically relevant ways.

In this note we further elaborate on this pattern, and exhibit situations where the NCC together with the
conformal structure is enough fix the geometry in various ways, sometimes even uniquely.
More specifically, as one of our results we will show that if a conformal class contains a geodesically complete spacetime with vanishing null curvature, then every other
geometry in its conformal class either violates the NCC, or suffers from geodesic incompleteness along every geodesic along which the relative conformal factor is non-constant. In other words, in these situations, causal structure, the NCC and geodesic
completeness are enough to uniquely fix the metric.

We begin in Section \ref{sec:basics} by setting up the notation and introducing basic ideas which will be used throughout. Sections \ref{sec:rigidity} is devoted to presenting our main results in a self-contained manner, with their proofs relegated to Section \ref{sec:proofs} for ease of reading. Finally, in Section \ref{sec:discuss} we discuss our findings and point out applications of our results to AdS/CFT, particularly for bulk reconstruction.

\section{Weyl Transformations and Null Curvature}\label{sec:basics}
Throughout this note, a spacetime is defined as a pair $(M,g)$ consisting of a $C^2$ Lorentzian manifold $M$ and a $C^2$ metric tensor $g : \Gamma(TM)\times\Gamma(TM)\to\mathbb{R}$, where $\Gamma(TM)$ denotes vector fields of the tangent bundle $TM$ of $M$. The spacetime dimension is denoted by $D\equiv\dim M$ and we assume $D\ge3$ throughout. The causal structure of a spacetime $(M, g)$ is invariant under  \textit{Weyl transformations}, which are rescalings of the metric tensor $g \rightarrow \Omega^2 g$ by some positive function $\Omega$ on $M$. Hence, the specification of the causal structure of a spacetime is equivalent to specifying the conformal class of its metric, which leaves an overall factor in the metric undetermined.

One may deem a Weyl transformation pathological if its \textit{Weyl factor} $\Omega$ is not $C^2$ or exhibits an undesirable asymptotic behavior. In particular, denoting the conformal boundary of $M$ by $\partial M$,  one could in principle have $\Omega(p)\to0$ or $\Omega(p)\to\infty$ as $p$ is taken towards $\partial M$ along some sequence of points. In such cases, the extendibility and completeness properties of the spacetime or the topology of its conformal boundary may be altered. It is thus useful to distinguish these more ``violent'' Weyl transformations from those which do not alter such properties. To do so, it will be helpful to parameterize Weyl transformations using $\Omega = e^{\omega}$, with $\omega : M \to \mathbb{R}$ sometimes referred to as the \textit{Weyl exponent}. In particular, it is easy to see that $\Omega$ will not alter completeness properties of the underlying spacetime if $\omega$ is bounded (see Lemma~\ref{lem:listcomplete} for a more precise statement).

Consider now a null vector field $k\in\Gamma(TM)$, $k^2=0$, and its associated null curvature
\begin{equation}
    R_{kk} = \Ric(k, k),
\end{equation}
where $\Ric$ is the Ricci curvature of $(M, g)$, also written as $R_{ab}$ in abstract index notation. When $R_{kk}\geq 0$ for every null vector field $k$ we say that the NCC holds for $(M, g)$ and, if $R_{kk}=0$ for every such $k$, then we say that $(M,g)$ has zero null curvature. Importantly, the condition of zero null curvature does not trivialize the Ricci tensor of the spacetime -- in fact, e.g. any region of spacetime with $\Ric = f g$ for any scalar $f$ will satisfy this property.\footnote{Notice this is more general than the condition that the manifold be Einstein.}

The NCC is a well-defined geometric condition on Lorentzian manifolds independent of a theory of gravity (be it classical or semiclassical) or any physics at all, even if it is only physically motivated in the classical limit. The most standard forms of field theory matter coupled to Einstein gravity, such as minimally coupled scalars, gauge fields, and $p$-forms (all with canonical kinetic terms), all source classical spacetimes respecting the NCC. The same is true for the bosonic sector of type IIA, IIB and $D=11$ supergravity,\footnote{This is immediate for $D=11$ supergravity, whose bosonic sector only contains a free $3$-form (besides Chern-Simons terms of no relevance here). For type IIA and IIB supergravity, the bosonic matter stress tensors are essentially those of free bosons and $p$-forms with canonical kinetic parts, up to overall dilatonic prefactors having no effect on the NEC \cite{Hamilton:2016ito}. Various Kaluza-Klein dimensional reductions can also be checked to preserve the NEC \cite{Parikh:2014mja} -- e.g. for $\mathcal{N}=2$ supergravity in $D=4$ one just gets Einstein-Maxwell. We thank Gary Horowitz for helpful discussion on this.} and in classical bosonic string theory the NCC can be derived to leading order in $\alpha'$ \cite{Parikh:2014mja}. In the context of AdS/CFT, it is furthermore known that classical matter violating the NCC can lead to various pathologies in the holographic dictionary \cite{Headrick:2014cta}, and so assuming the NCC in the classical limit is well motivated there. The assumption of the NCC is equivalent to the condition that gravity always focuses null geodesic congruences.

Knowledge of the conformal class of a spacetime is clearly not enough to fix its metric, but what if we also demand the NCC? Given a spacetime that obeys the NCC, one conformally equivalent to it need not do so. Under a Weyl transformation with Weyl exponent $\omega$, the Ricci tensor becomes
\begin{equation}\label{eq:ricciweyl}
   R^{(\omega)}_{ab} = R_{ab} - (D-2) \left( \nabla_a \nabla_b \omega - \nabla_a \omega \nabla_b \omega \right) - \left( \nabla^2 \omega + (D-2) (\dd\omega)^2 \right) g_{ab}.
\end{equation}
The transformed null curvature then reads
\begin{equation}\label{eq:expnccterm}
    R_{kk}^{(\omega)} = R_{kk} - (D-2) k^a k^b \left( \nabla_a \nabla_b \omega - \nabla_a \omega \nabla_b \omega \right).
\end{equation}
It follows that $\omega$ defines an NCC-preserving Weyl transformation on $(M,g)$ if and only if, everywhere and for any $k$,
\begin{equation}\label{eq:expncc}
    R_{kk}^{(\omega)} \geq 0 \quad\Longleftrightarrow\quad (D-2)k^a k^b \left( \nabla_a \nabla_b \omega - \nabla_a \omega \nabla_b \omega \right) \leq R_{kk}.
\end{equation}
Consider now a null geodesic curve $\gamma : \mathbb{R}\supseteq I \to M$ that is affinely parameterized by $\lambda\in I$. For its tangent vector field $k^a\equiv(\partial_\lambda)^a$, which obeys the geodesic equation $\nabla_k k = 0$, the condition on the right-hand side of \eqref{eq:expnccterm} becomes simply
\begin{equation}\label{eq:secschw}
    \nabla_k^2 \omega - (\nabla_k \omega)^2 =  \frac{ \dd^2 \omega_\gamma }{ \dd \lambda^2} -\left( \frac{ \dd \omega_\gamma }{ \dd \lambda }\right)^2\leq \frac{1}{D-2}R_{kk},
\end{equation}
where $\omega_\gamma \equiv \omega \circ \gamma$ is just the Weyl exponent along the pertinent geodesic. This inequality is the starting point for our work.

\section{Conformal Rigidity Results}
\label{sec:rigidity}

In this section we derive the main inequalities from which our theorems follow. We also state the theorems, with proofs postponed to the next section. In what follows, let $k^a\equiv(\partial_\lambda)^a$ be tangent to an affine null geodesic $\gamma$ with affine parameter $\lambda\in I\subseteq \mathbb{R}$. When referring to any scalar function on $M$, we will implicitly work with its composition with the curve $\gamma$, as was done above e.g. in \eqref{eq:secschw}.\footnote{In other words, to avoid cluttering the notation, by e.g. $\omega$ we will often mean $\omega_\gamma$ -- any exception to this should be clear from context.}

Substituting $\omega = -\ln u$ for a strictly positive function $u$ into \eqref{eq:expnccterm}, we find that to satisfy the NCC in the Weyl-rescaled spacetime we need
\begin{equation}
    R_{kk}^{(\omega)}(\lambda) = R_{kk}(\lambda) + (D-2)\frac{ u''(\lambda) }{ u(\lambda) } \geq 0.
\end{equation}
Multiplying through by $u$ and integrating once and twice yields the following inequalities:
\begin{align}
    u'(\lambda) &\gtreqless u'(\lambda_0)  - \frac{ 1 }{ D-2 } \int_{\lambda_0}^{\lambda}\dd \lambda'  \;
    R_{kk}(\lambda') u(\lambda'), &  \forall\lambda &\gtreqless \lambda_0, \label{eq:udifbound1}\\
    u(\lambda) &\geq u(\lambda_0) + u'(\lambda_0) (\lambda - \lambda_0) - \frac{ 1 }{ D-2 }
    \int_{\lambda_0}^{\lambda}\dd \lambda' \; (\lambda-\lambda') R_{kk}(\lambda') u(\lambda'), & \forall
    \lambda &\in I, \label{eq:ubound}
\end{align}
for any $\lambda_0\in I$. 
As a special case of these inequalities, consider a situation where $I=(-\infty, c)$, and assume that the following limits exist
\begin{equation}
    \lim_{\lambda \rightarrow -\infty}u(\lambda) \neq 0, \qquad \lim_{\lambda\rightarrow -\infty} u'(\lambda).
\end{equation}
Then we must have $\lim_{\lambda \rightarrow -\infty}(\lambda u'(\lambda)) =0$, for otherwise we would need $u'(\lambda) \sim \mathcal{O}(1/\lambda)$ at large $\lambda$, giving that $u$ diverges logarithmically. Observing that $u=1/\Omega$ and taking $\lambda_0=-\infty$, we then get
\begin{align}
    -\frac{ \dd }{ \dd \lambda } \left( \frac{1}{\Omega(\lambda)} \right) &\leq \frac{1}{ D-2 }\int_{-\infty}^{\lambda} \dd \lambda' \;\frac{R_{kk}(\lambda')}{\Omega(\lambda')}, \label{eq:omegadifbound}\\
    \frac{ 1 }{ \Omega(\lambda) } &\geq \frac{ 1 }{ \Omega(-\infty) } - \frac{ 1 }{ D-2
    }\int_{-\infty}^{\lambda} \dd \lambda' \;(\lambda - \lambda') \frac{  R_{kk}(\lambda') }{
\Omega(\lambda') }. \label{eq:omegabound}
\end{align}
Thus, in the situation where $\Omega$ approaches a non-zero constant $\Omega(-\infty)$ as $\lambda\to-\infty$ so that the limit $\lim_{\lambda \rightarrow -\infty} \Omega'(\lambda)$ exists, we encounter two possibilities:
\begin{itemize}
    \item If the original spacetime has zero null curvature $R_{kk}=0$ and we want to preserve the NCC, then we see from \eqref{eq:omegabound} that $\Omega$ must be monotonically non-increasing along $\gamma$ as we move away from the direction in which $\gamma$ is complete.
    \item If the original spacetime satisfies the NCC with nonvanishing null curvature $R_{kk}\geq0$, then the potential increase of $\Omega(\lambda)$ along a complete null geodesic $\gamma$ is bounded from above by the previous null curvature the geodesic has encountered, as dictated by \eqref{eq:omegabound}.
    In particular, if $\Omega$ satisfies $\Omega(\lambda) \geq \Omega_{\rm min} > 0$ along some complete null geodesic $\gamma$, then 
    \begin{equation}\label{eq:ancc}
        - \frac{ \dd }{ \dd \lambda } \left(\frac{ 1 }{ \Omega(\lambda) }\right) \leq \frac{ \mathcal{R}_{\gamma} }{ (D-2)\Omega_{\rm min}
        }, \qquad \mathcal{R}_{\gamma} \equiv \int_{-\infty}^{\infty}\dd \lambda' R_{kk}(\lambda'),
    \end{equation}
    where we have defined the \emph{averaged null curvature} $\mathcal{R}_{\gamma}$ along the null geodesic $\gamma$.\footnote{This object and the condition that it be nonnegative have appeared previously in e.g. \cite{Borde:1987qr,Engelhardt:2016aoo}. In general relativity, it is analogous to the averaged null energy, arising in discussions of the averaged NEC \cite{Ford:1994bj,Graham:2007va,Hartman:2016lgu}.}
    Integrating now \eqref{eq:ancc} from $\lambda_0$ to $\lambda$, we get
    \begin{equation}\label{eq:ancc2}
        \frac{ 1 }{ \Omega(\lambda) } \gtreqless \frac{ 1 }{ \Omega(\lambda_0) } - \frac{ \mathcal{R}_{\gamma} }{ (D-2)\Omega_{\rm min}
        }(\lambda-\lambda_0), \qquad \lambda \gtreqless \lambda_0.
    \end{equation}
\end{itemize}

When $R_{kk}=0$ and we do not assume anything about the asymptotic behavior of $\Omega$, we find the following more general statement, from which most of our results will follow: 
\begin{restatable}{nlemma}{lemcases}
    \label{lem:cases}
   Let $(M,g)$ be a spacetime with zero null curvature. Let $\Omega \in C^2(M)$ be a positive function, and let $\gamma$ be any complete null geodesic affinely parameterized by $\lambda \in \mathbb{R}$. Then at least one of the following is true:
    \begin{enumerate}
        \item\label{c1} The spacetime $(M, \Omega^2 g)$ violates the NCC along $\gamma$.
        \item\label{c2} The Weyl factor $\Omega$ is constant along $\gamma$.
        \item\label{c3} $\lim\limits_{\lambda \rightarrow \infty } \Omega(\lambda) = 0$ or $\lim\limits_{\lambda \rightarrow -\infty } \Omega(\lambda) = 0$, with $\Omega$ decaying at least as fast as $\mathcal{O}(1/|\lambda|)$.
    \end{enumerate}
\end{restatable}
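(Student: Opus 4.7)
The plan is to reduce the NCC-preserving condition on the Weyl rescaling to a convexity statement for $u \equiv 1/\Omega$ along $\gamma$ and then run a short convex-analysis case split. Since $(M,g)$ has zero null curvature, $R_{kk}(\lambda) \equiv 0$ along $\gamma$, and the identity $R_{kk}^{(\omega)} = R_{kk} + (D-2)u''/u$ derived in the excerpt (with $\omega = -\ln u$, $u > 0$) collapses the NCC preservation inequality $R_{kk}^{(\omega)} \ge 0$ along $\gamma$ to the simple requirement $u''(\lambda) \ge 0$ for all $\lambda \in \mathbb{R}$. So if case~\ref{c1} fails, $u = 1/\Omega$ is a strictly positive convex $C^2$ function on all of $\mathbb{R}$ (here using that $\gamma$ is complete).

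Next, I would exploit the fact that $u'' \ge 0$ makes $u'$ nondecreasing on $\mathbb{R}$. If $u' \equiv 0$, then $u$ and hence $\Omega$ is constant along $\gamma$, so case~\ref{c2} holds and we are done. Otherwise, $u'$ is not identically zero, and I split into two (non-exclusive) subcases depending on its sign.

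If there exists $\lambda_0$ with $u'(\lambda_0) > 0$, monotonicity of $u'$ gives $u'(\lambda) \ge u'(\lambda_0)$ for all $\lambda \ge \lambda_0$, and integrating yields $u(\lambda) \ge u(\lambda_0) + u'(\lambda_0)(\lambda - \lambda_0)$, forcing $u(\lambda) \to \infty$ at least linearly as $\lambda \to \infty$; equivalently $\Omega(\lambda) = 1/u(\lambda) \to 0$ with $\Omega(\lambda) = \mathcal{O}(1/\lambda)$. Symmetrically, if there exists $\lambda_0$ with $u'(\lambda_0) < 0$, then $u'(\lambda) \le u'(\lambda_0) < 0$ for all $\lambda \le \lambda_0$, so $u(\lambda) \to \infty$ linearly as $\lambda \to -\infty$ and $\Omega(\lambda) = \mathcal{O}(1/|\lambda|)$. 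Since $u$ is not constant and $u'$ is nondecreasing, at least one of these two sign conditions must occur, so case~\ref{c3} holds.

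The only subtle step is confirming that the NCC-preserving condition really reduces to pointwise convexity of $u$ along \emph{every} complete null geodesic, which follows directly from \eqref{eq:secschw} with $R_{kk} = 0$ and the substitution $\omega = -\ln u$. Once that reduction is in place, the argument is purely elementary convex analysis and I do not anticipate any significant obstacle; the bookkeeping of which of the three alternatives is forced simply follows the trichotomy $u' \equiv 0$, $u'$ positive somewhere, $u'$ negative somewhere.
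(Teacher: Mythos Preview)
Your proposal is correct and essentially identical to the paper's proof: both assume cases~\ref{c1} and~\ref{c2} fail, pick $\lambda_0$ with $u'(\lambda_0)\neq 0$, and use the linear lower bound $u(\lambda)\ge u(\lambda_0)+u'(\lambda_0)(\lambda-\lambda_0)$ (the paper invokes \eqref{eq:ubound} with $R_{kk}=0$, you re-derive it directly from $u''\ge 0$) to force $u\to\infty$ linearly in one direction and hence $\Omega=\mathcal{O}(1/|\lambda|)$. The only cosmetic difference is that you phrase the reduction as ``convexity of $u$'' rather than citing the integrated inequality, but the content is the same.
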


With this lemma in hand, together with another technical lemma relegated to the next section, we find the following series of results.
\begin{restatable}{nthm}{geocomplete}
    \label{thm:geomplete}
    Let $(M,g)$ be a connected and geodesically complete spacetime with zero null curvature.
    If $\omega\in C^2(M)$ is bounded, then either $\omega$ is constant on $M$ or $(M, e^{2\omega} g)$ violates the NCC.
\end{restatable}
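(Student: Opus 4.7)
The plan is to apply Lemma~\ref{lem:cases} to every complete null geodesic in $M$, conclude that $\omega$ is constant along each such geodesic, and then use the spanning property of the null cone together with connectedness of $M$ to promote this to global constancy.

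I would proceed by contradiction: assume $(M, e^{2\omega}g)$ satisfies the NCC everywhere. Since $\omega$ is bounded, the Weyl factor $\Omega\equiv e^\omega$ satisfies $0 < \Omega_{\rm min} \leq \Omega \leq \Omega_{\rm max} < \infty$ globally, so in particular $\Omega$ cannot decay to zero along any geodesic. Thus case~\ref{c3} of Lemma~\ref{lem:cases} is immediately ruled out along every complete null geodesic, and the NCC assumption rules out case~\ref{c1}. Only case~\ref{c2} remains, so $\Omega$ (hence $\omega$) is constant along every complete null geodesic in $M$.

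Because $(M,g)$ is geodesically complete, every null geodesic is complete, so at every $p\in M$ and for every null $k\in T_pM$ we have $k^a\nabla_a\omega|_p = 0$. In dimension $D\geq 3$, the null cone at $p$ spans $T_pM$: choosing a Lorentz-orthonormal frame $\{e_0,\dots,e_{D-1}\}$ at $p$ with $e_0$ timelike, the $D$ vectors $e_0 + e_i$ for $i=1,\dots,D-1$ together with $e_0 - e_1$ are null and linearly independent (a quick rank computation confirms this). It follows that $\nabla\omega$ vanishes identically on $M$, and connectedness of $M$ forces $\omega$ to be constant, contradicting the alternative and establishing the theorem.

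None of the individual steps is genuinely hard once Lemma~\ref{lem:cases} is in hand; the argument is structural bookkeeping. The only mild subtleties are verifying that boundedness of $\omega$ correctly eliminates case~\ref{c3} (immediate from $\Omega\geq\Omega_{\rm min}>0$) and promoting \emph{constant along every null geodesic} to \emph{globally constant} via the spanning property of the null cone. I expect the real analytic work to live inside Lemma~\ref{lem:cases}; the theorem itself is essentially a clean corollary, with the role of the hypotheses being very transparent — boundedness kills the decay alternative, geodesic completeness enlarges the family of geodesics one may integrate along, and connectedness converts pointwise vanishing of $\dd\omega$ into global constancy.
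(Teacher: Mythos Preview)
Your proof is correct and begins exactly as the paper does: assuming the NCC for the rescaled metric, boundedness of $\omega$ rules out case~\ref{c3} of Lemma~\ref{lem:cases} and completeness makes every null geodesic eligible, so $\omega$ is constant along each one. The only divergence is in the final step, promoting ``constant along every null geodesic'' to ``globally constant''. The paper argues that any two points of the connected manifold can be joined by a broken null geodesic (approximating an arbitrary path), along which $\omega$ is constant segment by segment. You instead observe that constancy along all null geodesics through $p$ forces $k^a\nabla_a\omega|_p=0$ for every null $k$, and then use that the null cone spans $T_pM$ (your explicit basis of $D$ null vectors checks out) to conclude $\dd\omega=0$ pointwise, hence $\omega$ is constant by connectedness. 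Your route is a clean pointwise linear-algebra argument, whereas the paper's is a global path-based one; both are valid, and yours has the mild advantage of not needing the (standard but unproved in the paper) claim that any path can be approximated by a broken null geodesic.
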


\begin{restatable}{nthm}{thmcompVSncc}
    \label{thm:compVSncc}
    Let $(M,g)$ be a spacetime with zero null curvature. If $\omega\in C^2(M)$ is bounded and non-constant, then either $(M,e^{2\omega} g)$ is null-geodesically incomplete, or it violates the NCC. If $(M, g)$ is null-geodesically complete, then the previous is true also when $\omega$ is only bounded from below.
\end{restatable}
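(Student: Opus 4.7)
The plan is to reduce both statements to Lemma~\ref{lem:cases} by exhibiting a complete null geodesic of $(M,g)$ along which the Weyl factor $\Omega=e^{\omega}$ is non-constant and bounded away from zero, so that Case~\ref{c1} of that lemma is the only option left.

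I first extract a suitable null geodesic from the hypothesis that $\omega$ is non-constant. Since $\omega\in C^2(M)$ is non-constant on $M$, there exists a point $p\in M$ at which $d\omega(p)\neq 0$. Because $D\geq 3$, the null cone at $p$ spans $T_pM$ and therefore cannot be contained in any proper hyperplane; in particular, there exists a null vector $k\in T_pM$ with $k^a\nabla_a\omega(p)\neq 0$. Let $\gamma$ be the affinely parameterized null geodesic through $p$ with initial tangent $k$. Then $\omega\circ\gamma$ has nonzero derivative at the parameter value corresponding to $p$, and in particular is non-constant along $\gamma$.

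Next I argue that $\gamma$ is $g$-complete in each of the two hypothesis sets. For the second statement this is immediate, since $(M,g)$ is assumed null-geodesically complete. For the first statement I argue by contradiction: suppose $(M,e^{2\omega}g)$ is both null-geodesically complete and satisfies the NCC. Since $\omega$ is bounded, the completeness-preservation result quoted in the text (Lemma~\ref{lem:listcomplete}) transfers null-completeness from $(M,e^{2\omega}g)$ back to $(M,g)$, so $\gamma$ is again $g$-complete. I then apply Lemma~\ref{lem:cases} with $\Omega=e^{\omega}$ along $\gamma$: Case~\ref{c2} is ruled out by construction of $\gamma$, and Case~\ref{c3} is ruled out because $\omega$ is bounded from below on $M$, which forces $\Omega\geq e^{\inf\omega}>0$ uniformly and prevents decay to zero at either end of $\gamma$. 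Therefore Case~\ref{c1} must hold, i.e.\ $(M,e^{2\omega}g)$ violates the NCC along $\gamma$. This yields the second statement directly and produces the contradiction required for the first.

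The main obstacle I foresee is the appeal to Lemma~\ref{lem:listcomplete}, namely that bounded $\omega$ ensures conformally related metrics share null-completeness properties; without it the contradiction step in the first statement does not close. With that technical input in hand, the rest is essentially mechanical: the null-cone/hyperplane step is a short linear-algebra observation using only $D\geq 3$, and the invocation of Lemma~\ref{lem:cases} is direct.
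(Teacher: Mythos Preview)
Your proof is correct, but it follows a genuinely different route than the paper's. Both arguments hinge on Lemma~\ref{lem:cases} and Lemma~\ref{lem:listcomplete} and on picking a null geodesic $\gamma$ along which $\omega$ is non-constant; the difference lies in how Case~\ref{c3} of Lemma~\ref{lem:cases} is handled. You \emph{exclude} Case~\ref{c3} outright using the lower bound on $\omega$ (so $\Omega\geq e^{\inf\omega}>0$ cannot decay to zero), which forces Case~\ref{c1} and hence NCC violation. The paper instead \emph{allows} Case~\ref{c3}: assuming the NCC, it lets $\Omega$ decay like $1/|\lambda|$ and then computes the new affine parameter $\tilde\lambda=\int\Omega^2\,d\lambda<\infty$ to conclude that $\gamma$ is incomplete in $(M,e^{2\omega}g)$. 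Correspondingly, the completeness transfer via Lemma~\ref{lem:listcomplete} runs in opposite directions: you push completeness from $(M,e^{2\omega}g)$ to $(M,g)$, while the paper pushes incompleteness from $(M,g)$ to $(M,e^{2\omega}g)$.

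What each buys: your argument is shorter (no integral estimate) and in fact proves something slightly stronger for the second clause, namely that when $(M,g)$ is null-geodesically complete and $\omega$ is bounded below and non-constant, the NCC is \emph{always} violated, not merely ``incomplete or NCC violated.'' The paper's argument, by contrast, does not use the lower bound at all in its Case~B computation, so when $(M,g)$ is complete its proof actually goes through with \emph{no} boundedness assumption on $\omega$ whatsoever, which your approach cannot reach since you need the lower bound to kill Case~\ref{c3}.
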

\noindent The first theorem states that a conformal rescaling of a geodesically complete spacetime with zero null curvature with bounded $\omega$ always causes NCC violation. The second states that in a general spacetime with zero null curvature, after a bounded conformal rescaling, negative null curvature or geodesic incompleteness must always be present. In the next theorem, we show that an NCC-preserving conformal rescaling with bounded $\omega$ is only ever possible in incomplete spacetimes:

\begin{restatable}{nthm}{thmincomplete}
    Let $(M,g)$ be a spacetime with zero null curvature. Let $\omega \in C^2(M)$ be bounded and non-constant. If $(M, e^{2\omega} g)$ preserves the NCC, then $(M, g)$ is null-geodesically incomplete.
\end{restatable}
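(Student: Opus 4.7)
The plan is to obtain this theorem essentially as a direct corollary of Theorem~\ref{thm:compVSncc}, together with the elementary fact that a bounded Weyl exponent cannot alter null-geodesic completeness.

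First, I would feed the hypotheses straight into Theorem~\ref{thm:compVSncc}. Since $(M, g)$ has zero null curvature and $\omega \in C^2(M)$ is bounded and non-constant, that theorem yields the dichotomy that either $(M, e^{2\omega}g)$ is null-geodesically incomplete, or it violates the NCC. By hypothesis the rescaled spacetime preserves the NCC, so the second alternative is ruled out, and we conclude that $(M, e^{2\omega}g)$ is null-geodesically incomplete.

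Second, I would transfer this incompleteness back to $(M,g)$ by invoking the completeness-preservation property of bounded Weyl factors noted in Section~\ref{sec:basics} (Lemma~\ref{lem:listcomplete}). Concretely, any null curve $\gamma$ in $(M,g)$ is also null in $(M,e^{2\omega}g)$, and if $\gamma$ is affinely parametrized by $\lambda$ with respect to $g$, an affine parameter $\tilde\lambda$ with respect to $e^{2\omega}g$ satisfies $\dd\tilde\lambda/\dd\lambda \propto e^{2\omega}$. Because $\omega$ is bounded above and below, $e^{2\omega}$ is pinched between two positive constants, so the integral $\tilde\lambda(\lambda) = \int^\lambda e^{2\omega(\lambda')}\dd\lambda'$ ranges over all of $\mathbb{R}$ if and only if $\lambda$ does. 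In particular, $(M,g)$ and $(M,e^{2\omega}g)$ have the same null-geodesically (in)complete geodesics, and so the incompleteness established in the first step immediately implies null-geodesic incompleteness of $(M,g)$.

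Since the heavy lifting is already carried out in the proof of Theorem~\ref{thm:compVSncc} (which in turn rests on Lemma~\ref{lem:cases}), no substantial obstacle remains here; the only subtlety is the standard verification that bounded $\omega$ preserves null-geodesic completeness, and this is built into the very definition of the boundedness condition used throughout the paper. Thus the proof reduces to combining Theorem~\ref{thm:compVSncc} with Lemma~\ref{lem:listcomplete} in the manner described.
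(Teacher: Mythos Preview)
Your proposal is correct and follows essentially the same approach as the paper's proof: apply Theorem~\ref{thm:compVSncc} to conclude that $(M,e^{2\omega}g)$ is null-geodesically incomplete (since the NCC alternative is ruled out by hypothesis), and then use Lemma~\ref{lem:listcomplete} with the boundedness of $\omega$ to transfer this incompleteness back to $(M,g)$. The paper's argument is just a terser version of yours.
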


Next we turn to some applications for spacetimes with commonly used asymptotics.
\begin{restatable}{nthm}{thmads}\label{thm:thmads}
    Let $(M, g)$ be an asymptotically locally AdS (AlAdS) spacetime with zero null curvature. Let $\Omega \in C^2(M)$ be a positive function and $A\subseteq M$ the set of points $p \in M$ 
    lying on at least one complete null geodesic with both endpoints on $\partial M$. 
    Assume that $\Omega$ extends to $\partial M$ with the same value everywhere.
    Then at least one of the following is true:
    \begin{enumerate}
        \item The spacetime $(M, \Omega^2 g)$ violates the NCC on $A$.
        \item The Weyl factor $\Omega$ is constant on $A$. 
        \item $(M, \Omega^2 g)$ is not AlAdS.
    \end{enumerate}
\end{restatable}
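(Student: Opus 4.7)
My plan is to apply Lemma~\ref{lem:cases} along each complete null geodesic generating a point of $A$ and piece the results together. Fix any $\gamma:\mathbb{R}\to M$ with both endpoints on $\partial M$, affinely parameterized so that $\gamma(\lambda)\to\partial M$ as $\lambda\to\pm\infty$, and let $c$ denote the common boundary value of $\Omega$. Continuity of the extension of $\Omega$ to $\partial M$ gives $\Omega(\gamma(\lambda))\to c$ at both ends of $\gamma$, and Lemma~\ref{lem:cases} then leaves one of: (i) $(M,\Omega^2g)$ violates the NCC along $\gamma$; (ii) $\Omega$ is constant on $\gamma$; (iii) $\Omega$ vanishes at some end of $\gamma$ at rate at least $\mathcal{O}(1/|\lambda|)$.

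If (i) occurs for some $\gamma\subset A$, then $(M,\Omega^2g)$ violates the NCC on $A$, giving conclusion~(1) of the theorem. If (ii) occurs for every $\gamma\subset A$, then $\Omega$ is constant along each such $\gamma$, with the constant pinned to the boundary limit $c$; since $A$ is by definition the union of all these geodesics, we obtain $\Omega\equiv c$ on $A$, which is conclusion~(2). The remaining possibility is that alternative~(iii) occurs along some $\gamma\subset A$ while (i) fails there; matching $\Omega\to0$ at one end with the boundary limit $\Omega\to c$ forces $c=0$, and it only remains to show that this case yields conclusion~(3), namely that $(M,\Omega^2g)$ is not AlAdS.

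To close that final step, I would fix an AlAdS compactification of $g$ with defining function $z$, so that $g\sim z^{-2}\bar g$ near $\partial M$ with $\bar g$ non-degenerate. A standard computation, readily checked in pure Poincar\'e AdS, gives $z\sim 1/|\lambda|$ at large affine parameter along any null geodesic reaching $\partial M$; the $\mathcal{O}(1/|\lambda|)$ decay from alternative~(iii) thus translates into $\Omega=\mathcal{O}(z)$ along $\gamma$, so $\Omega^2g=\mathcal{O}(1)\,\bar g$ along $\gamma$ near $\partial M$. Since $(M,g)$ and $(M,\Omega^2g)$ share the same conformal structure and hence the same conformal boundary $\partial M$, any candidate defining function $\tilde z$ for an AlAdS compactification of $\Omega^2g$ must vanish at $\partial M$, whence $\tilde z^2\Omega^2g\to 0$ along $\gamma$ at $\partial M$ — a degenerate boundary metric, incompatible with AlAdS. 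The main obstacle is precisely this last step: converting the affine-parameter decay supplied by Lemma~\ref{lem:cases} into a defining-function decay sharp enough to rule out \emph{every} AlAdS compactification of the rescaled spacetime, rather than just the one inherited from $g$. The first two cases are direct consequences of Lemma~\ref{lem:cases} together with the common-boundary-value hypothesis on $\Omega$.
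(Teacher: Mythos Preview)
Your proposal is correct and follows the same route as the paper: apply Lemma~\ref{lem:cases} along each boundary-anchored geodesic, convert the $\mathcal{O}(1/|\lambda|)$ decay of $\Omega$ into $\Omega=\mathcal{O}(z)$ via the Poincar\'e/Fefferman--Graham asymptotics $z\sim 1/|\lambda|$, and conclude that any candidate defining function $\hat z$ for $\Omega^2 g$ would fail to vanish at $\partial M$. Your intermediate observation that the common boundary value $c$ must equal zero in case~(iii) is a correct bonus not explicitly noted in the paper, but otherwise the two arguments coincide, including the step you flag as the main obstacle.
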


\begin{restatable}{nthm}{thmAF}
\label{thm:thmaf}
Let $(M, g)$ be an asymptotically flat spacetime with zero null curvature. Let $\Omega \in C^2(M)$ be a positive function and $A$ the set of points $p \in M$ 
    lying on at least one complete null geodesic with both endpoints lying on $\mathscr{I}^+ \cup \mathscr{I}^-$.
    Assume that $\Omega$ extends to $\mathscr{I}^+ \cup \mathscr{I}^-$ with the same value everywhere.
    Then at least one of the following is true:
    \begin{enumerate}
        \item The spacetime $(M, \Omega^2 g)$ violates the NCC on $A$.
        \item The Weyl factor $\Omega$ is constant on $A$. 
        \item $(M, \Omega^2 g)$ is not asymptotically flat.
    \end{enumerate}
\end{restatable}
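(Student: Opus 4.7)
The proof will closely parallel that of Theorem \ref{thm:thmads}, substituting the null infinity $\mathscr{I}^+\cup\mathscr{I}^-$ for the AlAdS conformal boundary. The strategy is to apply Lemma \ref{lem:cases} to a witnessing geodesic for each point of $A$ and convert the three outcomes of that lemma into the three alternatives in the statement.

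Concretely, I would fix an arbitrary $p\in A$ and, by the definition of $A$, choose an affinely parameterized complete null geodesic $\gamma:\mathbb{R}\to M$ through $p$ with both endpoints on $\mathscr{I}^+\cup\mathscr{I}^-$. Applying Lemma \ref{lem:cases} to $\gamma$ yields three cases. In case \ref{c1}, the NCC already fails somewhere along $\gamma\subseteq A$, establishing alternative 1. In case \ref{c2}, $\Omega$ is constant along $\gamma$; since the endpoints of $\gamma$ lie on $\mathscr{I}^+\cup\mathscr{I}^-$ and $\Omega$ extends there with a single common value $c$, this constant must equal $c$. Because $p$ was arbitrary in $A$, the same reasoning applied to every witnessing geodesic gives $\Omega\equiv c$ on all of $A$, yielding alternative 2. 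The crucial additional input relative to Lemma \ref{lem:cases} is the hypothesis that $\Omega$ shares one boundary value across $\mathscr{I}^+\cup\mathscr{I}^-$, which is precisely what glues together the per-geodesic constants into a global constant on $A$.

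Case \ref{c3} is where the argument becomes more delicate. Here $\Omega(\lambda)\to 0$ along $\gamma$ towards at least one endpoint, decaying at least as fast as $\mathcal{O}(1/|\lambda|)$. By the common-boundary-value hypothesis, this forces $\Omega$ to vanish on all of $\mathscr{I}^+\cup\mathscr{I}^-$. To turn this into alternative 3, I would use the conformal invariance of null geodesics: the rescaled affine parameter along $\gamma$ satisfies $d\tilde\lambda/d\lambda\propto\Omega^2$, so the $\mathcal{O}(1/|\lambda|)$ decay makes $\int\Omega^2\,d\lambda$ converge at the relevant endpoint, and $\gamma$ has finite affine length in $(M,\Omega^2 g)$ toward that endpoint. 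Since an asymptotically flat spacetime must have complete null generators reaching $\mathscr{I}^\pm$ in the physical metric, this forces $(M,\Omega^2 g)$ not to be asymptotically flat.

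The main obstacle I expect is making that last step fully rigorous: the precise claim that $\Omega$ vanishing on $\mathscr{I}^+\cup\mathscr{I}^-$ obstructs asymptotic flatness of $(M,\Omega^2 g)$, rather than merely altering the choice of conformal compactification. The cleanest, invariant handle is affine incompleteness of the null generators reaching null infinity, which is a diffeomorphism-invariant property of the physical spacetime and hence independent of any choice of compactification. I anticipate that the auxiliary completeness-type lemma the authors invoke in the proof of Theorem \ref{thm:thmads} transfers to the AF setting with minimal modification, since both reductions ultimately rest on the same kinematical statement: an $\mathcal{O}(1/|\lambda|)$ decay of $\Omega$ at an endpoint of $\gamma$ makes $\int \Omega^2\,d\lambda$ converge there, collapsing the affine length in the rescaled metric.
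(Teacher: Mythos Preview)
Your treatment of cases \ref{c1} and \ref{c2} matches the paper's argument. The divergence is in case \ref{c3}, and here the paper takes a different route than the one you outline.

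The paper does \emph{not} argue via affine incompleteness. Instead, both in Theorem~\ref{thm:thmads} and in Theorem~\ref{thm:thmaf}, it works directly with the defining function of the conformal compactification. Near $\mathscr{I}$ one has an asymptotic coordinate $z$ with $z|_{\partial M}=0$, $\dd z|_{\partial M}\neq 0$, and the paper shows that along a boundary-anchored null geodesic $z(\lambda)\sim c/|\lambda|$ to leading order. Since Lemma~\ref{lem:cases} gives $\Omega\sim \mathcal{O}(1/|\lambda|)$ at one end, one has $\Omega\sim z$ there with nonzero coefficient. Now any defining function $\hat z$ for $(M,\Omega^2 g)$ must satisfy $\hat z\,\Omega\sim z$ (up to a positive $C^2$ factor), forcing $\hat z$ to approach a nonzero constant at the boundary. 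This contradicts $\hat z|_{\partial M}=0$, so $(M,\Omega^2 g)$ fails to be asymptotically flat. Your anticipation that the paper invokes a ``completeness-type lemma'' in the AlAdS proof is therefore off; no such lemma is used in either case.

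Your affine-incompleteness argument is morally correct and is in fact dual to the paper's: the same scaling $\Omega\sim z\sim 1/|\lambda|$ underlies both. But as you yourself note, it leaves a soft step: you must justify that asymptotic flatness forces bulk null geodesics reaching $\mathscr{I}$ to be affinely complete (you wrote ``null generators,'' but you mean bulk null geodesics with endpoints on $\mathscr{I}$). That statement is true and follows from the defining-function condition $\hat z|_{\partial M}=0$, $\dd\hat z|_{\partial M}\neq 0$, but establishing it is essentially equivalent to the paper's direct defining-function contradiction. The paper's approach is slightly cleaner because it bypasses the completeness detour and hits the contradiction in one step.
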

\noindent These two analogous results prove strong versions of conformal rigidity for spacetimes asymptoting to AdS or flat metrics when subject to the NCC. In particular, they show that with prescribed asymptotics, regions probed by boundary-anchored null geodesics do not admit any nontrivial NCC-preserving Weyl transformation where $\Omega$ asymptotes to a constant.

Finally, as an example of how causal structure and the NCC can be used to constrain spacetime also in regions without boundary-anchored null geodesics, we now investigate the BTZ black hole \cite{BTZ}, where all null geodesics except the closed photon orbits are incomplete \cite{Cruz:1994ir}. We find the following:
\begin{restatable}{nthm}{thmbtz}
\label{thm:btz}
    Let $(M, g)$ be the BTZ black hole spacetime and $r$ the area radius. Then $(M, e^{2\omega} g)$ violates the NCC unless $\omega$ preserves spherical symmetry outside the horizon. Furthermore, if the limits $\lim_{r\rightarrow \infty} \omega(r)$  and $\lim_{r \rightarrow \infty} \omega'(r)$ exist, then outside the horizon
    \begin{equation}
    \omega'(r) \geq 0.
    \end{equation}
\end{restatable}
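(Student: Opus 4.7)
The plan hinges on the observation that BTZ is Einstein, $R_{ab} = -2\ell^{-2}g_{ab}$, so its null curvature vanishes identically. The NCC-preservation constraint \eqref{eq:secschw} then reduces along any affinely-parametrized null geodesic $\gamma$ to $u_\gamma''(\lambda)\geq 0$, where $u_\gamma \equiv e^{-\omega_\gamma}$ and primes denote $d/d\lambda$. Convexity of $u_\gamma$ along every null geodesic is the single inequality driving the entire argument.

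For the monotonicity claim, I would consider an outgoing radial null geodesic in the BTZ exterior; the affine parametrization gives $r(\lambda) = r_0 + E\lambda$ with $E>0$, so $r\to\infty$ as $\lambda\to\infty$ and the geodesic is future-complete. Once $\omega$ is spherically symmetric (granted by the first part) one has $\omega_\gamma(\lambda)=\omega(r_0+E\lambda)$, and by hypothesis both $\omega_\gamma(\infty)$ and $\omega_\gamma'(\infty) = E\,\omega'(\infty)$ exist. Since $u_\gamma$ tends to a positive finite limit, the limit of $u_\gamma'$ must vanish, else $u_\gamma$ would grow without bound (an argument analogous to the one leading to \eqref{eq:omegabound}). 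Combining $u_\gamma'(\infty)=0$ with $u_\gamma''\geq 0$ yields $u_\gamma'\leq 0$ for every $\lambda$, equivalently $\omega'(r)\geq 0$ throughout $r>r_+$.

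For the spherical symmetry, the idea is to apply Lemma~\ref{lem:cases} to the complete null geodesics of BTZ, namely the closed photon orbits. Because these orbits have compact support in the bulk, the third alternative of Lemma~\ref{lem:cases} (decay of $\Omega$) is precluded by continuity of $\omega$ on a compact set, leaving either NCC violation or $\omega$-constancy along each such orbit. Orbits winding nontrivially in the $\phi$ direction then force $\phi$-independence of $\omega$ on the locus they trace. To promote this $\phi$-independence to the full exterior, I would exploit the $\partial_\phi$ Killing symmetry of BTZ: given any null geodesic $\gamma$ and its $\phi$-rotated image $\gamma_c$, the convexity $u_{\gamma_c}''\geq 0$ holds for every $c\in[0,2\pi)$; an averaging over $c$ together with the boundary data from the closed-orbit locus should rule out any residual $\phi$-dependence of $\omega$ in the exterior.

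The main obstacle is this last step of extending $\phi$-independence from the closed photon orbits to the entire exterior. The exterior is not covered by closed photon orbits, and the auxiliary null geodesics one needs to use (boundary-anchored geodesics with nonzero angular momentum that turn around at some $r^\ast>r_+$) are incomplete, so Lemma~\ref{lem:cases} does not apply to them directly. The argument must therefore piece together local convexity constraints along many incomplete geodesics using the $\partial_\phi$ symmetry and the known $\phi$-independence on the closed-orbit locus; this gluing is the delicate part of the proof.
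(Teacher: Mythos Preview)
Your argument for $\omega'(r)\geq 0$ via outgoing radial null geodesics is correct and matches the paper's: in BTZ one has $r(\lambda)=r_0+E\lambda$, so these geodesics are complete towards $r\to\infty$ and the one-sided bound \eqref{eq:omegabound} applies directly once the limits of $\omega$ and $\omega'$ at infinity are assumed to exist.

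The gap is in the spherical-symmetry step. You correctly identify the closed photon orbits as the complete null geodesics along which Lemma~\ref{lem:cases} forces $\omega$ to be constant, but you then assume these orbits occupy only some proper sublocus of the exterior and propose to propagate $\phi$-independence to the rest via an averaging/gluing argument on incomplete geodesics --- a step you yourself flag as delicate and do not carry out. The paper sidesteps this entirely by invoking a BTZ-specific fact (cited from \cite{Cruz:1994ir}): circular photon orbits exist at \emph{every} radius $r>r_+$. Through each exterior point there thus pass two complete null geodesics (the two orientations of the circular orbit at that radius), whose tangents span the tangent space of the constant-$r$ surface; constancy of $\omega$ along both gives $\partial_t\omega=\partial_\phi\omega=0$ at once. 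With this input your ``main obstacle'' dissolves and no gluing is needed. As a side remark, your compactness justification for excluding case~\ref{c3} of Lemma~\ref{lem:cases} is also shaky: the circular orbits sit at fixed $r$ but range over all $t\in\mathbb{R}$, so their images in $M$ are not compact.
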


\section{Proofs}
\label{sec:proofs}

\lemcases*
\begin{proof}

Assume $(M, \Omega^2 g)$ satisfies the NCC and $\Omega$ is not everywhere constant along $\gamma$.
Then there exists an affine parameter $\lambda_0$ such that $u'(\lambda_0) \neq 0$. 
From \eqref{eq:ubound}, one has
\begin{equation}
    u(\lambda) \geq u'(\lambda_0)(\lambda-\lambda_0) + u_0.
\end{equation}
Depending on the sign of $u'(\lambda_0)$, we get that $u$ diverges at least as fast as $|\lambda|$ as
either $\lambda \rightarrow \infty$ or $\lambda \rightarrow -\infty$. Since $\Omega=1/u$, the result follows.

\end{proof}

\begin{restatable}{nlemma}{lemlistcomplete}
    \label{lem:listcomplete}
    Let $\gamma$ be an inextendible null geodesic on a spacetime $(M,g)$, and $\Omega \in C^2(M)$ a positive function.
    \begin{itemize}
        \item If $1/\Omega_{\gamma}$ is bounded above and $\gamma$ is complete in $(M, g)$, then $\gamma$ is complete in $(M, \Omega^2 g)$.
        \item If $\Omega_{\gamma}$ is bounded above and $\gamma$ is complete in $(M, \Omega^2 g)$, then $\gamma$ is complete in $(M, g)$.
    \end{itemize}
\end{restatable}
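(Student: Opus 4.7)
My plan is to exploit the fact that null geodesics are conformally invariant as unparameterized curves, so that only their affine parameter changes under $g \mapsto \Omega^2 g$. First I would compute, via the standard Christoffel correction with $\omega \equiv \ln \Omega$, that if $k^a = (\partial_\lambda)^a$ is affinely parameterized in $(M,g)$ then $\tilde{\nabla}_k k^a = 2\, k(\omega)\, k^a$, where $\tilde\nabla$ denotes the connection of $\tilde g \equiv \Omega^2 g$. An affine parameter $\tilde\lambda$ for $\gamma$ in $(M,\tilde g)$ is then obtained by reparameterizing $\tilde k = f k$ with $f$ chosen to cancel the inhomogeneous piece; solving $k(f) + 2 f k(\omega)=0$ forces $f \propto \Omega^{-2}$, and hence, up to an overall positive constant and an additive shift,
\begin{equation}
    \tilde{\lambda}(\lambda) - \tilde{\lambda}(\lambda_0) = \int_{\lambda_0}^{\lambda} \Omega_\gamma^2(\lambda') \dd \lambda'.
\end{equation}
Because $\gamma$ has the same image in both spacetimes, inextendibility carries over between them, and completeness in each spacetime amounts to the respective affine parameter covering all of $\mathbb{R}$.

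For the first claim I would use $1/\Omega_\gamma \leq C$ for some $C>0$, equivalently $\Omega_\gamma^2 \geq 1/C^2$. Since $\gamma$ is complete in $(M,g)$, the parameter $\lambda$ runs over all of $\mathbb{R}$, and the integral formula yields
\begin{equation}
    |\tilde{\lambda}(\lambda) - \tilde{\lambda}(\lambda_0)| \geq \frac{|\lambda - \lambda_0|}{C^2},
\end{equation}
so $\tilde\lambda \to \pm\infty$ as $\lambda \to \pm\infty$. Combined with inextendibility of $\gamma$ in $(M, \tilde g)$, this gives completeness there.

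For the second claim I would observe the symmetry $(M,g) = (M, (1/\Omega)^2 (\Omega^2 g))$ and apply the first claim with $(g, \Omega)$ replaced by $(\tilde g, 1/\Omega)$: the hypothesis ``$\Omega_\gamma$ bounded above'' becomes the corresponding ``$1/\tilde\Omega_\gamma$ bounded above'' on $(M,\tilde g)$. Equivalently, a direct estimate from $\Omega_\gamma \leq M$ gives $|\tilde\lambda(\lambda) - \tilde\lambda(\lambda_0)| \leq M^2 |\lambda - \lambda_0|$, so if $\tilde\lambda$ covers all of $\mathbb{R}$ then $\lambda$ must too.

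The computation is elementary; the only point worth verifying carefully is that the image of $\gamma$ and its inextendibility as a null geodesic are conformally invariant, which follows because the set of null pregeodesics depends only on the conformal class of the metric, and the $C^2$ assumption on $\Omega$ ensures the reparameterization is $C^2$ so that the integral formula for $\tilde\lambda$ is well-defined. I do not anticipate any deep obstacle.
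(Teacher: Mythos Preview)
Your proposal is correct and follows essentially the same approach as the paper: both establish the affine-parameter relation $\dd\tilde\lambda/\dd\lambda = \Omega_\gamma^2$, note that inextendibility is conformally invariant, and then bound the integral $\int \Omega_\gamma^2\,\dd\lambda$ from below using the assumed bound on $1/\Omega_\gamma$ (with the second bullet handled by the symmetry $\Omega \leftrightarrow 1/\Omega$). Your version is slightly more explicit in deriving the reparameterization via the Christoffel correction and in treating both directions $\lambda\to\pm\infty$, but the argument is the same.
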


\begin{proof}
    Note first that $\gamma$ must be inextendible as a geodesic in $(M, \Omega^2 g)$, 
    since $(M, \Omega^2 g)$ and $(M, g)$ are identical at the manifold level, so an
    endpoint in $(M, g)$ is an endpoint in $(M, \Omega^2 g)$.
    The affine parameters $\lambda$ and $\tilde{\lambda}$ of $\gamma$ respectively in $(M, g)$ and $(M, \Omega^2 g)$ are related by
    \begin{equation}
    \label{eq:tildel}
        \frac{ \dd \tilde{\lambda}}{ \dd\lambda } = c\Omega_\gamma^2,
    \end{equation}
    where one can fix $c=1$ and $\lambda(\tilde{\lambda}=0)=0$ without loss of generality. 
    Assume $1/\Omega_\gamma$ is bounded above so there exists an $\epsilon > 0$ such that $\Omega_\gamma\ge\epsilon$.
    If $\gamma$ is complete in $(M, g)$, then 
    \begin{equation}
        \lim_{\lambda\to\infty} \tilde{\lambda} = \int_0^{\lambda} \dd \lambda \; \Omega_\gamma^2 \geq \epsilon^2 \int_0^{\infty} \dd \lambda = \infty,
    \end{equation}
    so $\gamma$ is complete in $(M, \Omega^2 g)$. The proof for $\Omega_\gamma$ bounded is analogous, replacing $\Omega$ by $1/\Omega$. 
\end{proof}

\geocomplete*
\begin{proof}
    Assume $(M, e^{2\omega} g)$ satisfies the NCC. Since $(M,g)$ is complete and $\omega$ is bounded by assumption, Lemma \ref{lem:cases} implies $\omega$ is constant along every
    null geodesic. This will remain true along any broken null geodesic (i.e., any piecewise differentiable curve that is the concatenation of null geodesic segments). This is because all segments are extendible to complete geodesics, with concatenated intersections in pairs. Now, because the manifold is connected, there exists a path $\gamma$ between any two points on $M$. One can approximate any such path in an arbitrarily small neighborhood of $\gamma$ by a broken null geodesic also connecting the two endpoints. Hence $\omega$ is the same at any two points of $M$, proving the desired result.
\end{proof} 

\thmcompVSncc*
\begin{proof}
    Assume $(M, \Omega^2 g)$ with $\Omega = e^{\omega}$ does not violate the NCC. If $\Omega$ is not constant, there 
    must exist a null vector field $k$ and a point $p$ such that $\nabla_k \Omega|_{p} \neq
    0$. Let now $\gamma$ be an inextendible null geodesic passing through $p$ with tangent $k$ there affinely parameterized by $\lambda$. 
    If this geodesic is incomplete in $(M, g)$ we are done: since $\Omega$ is bounded,
    Lemma~\ref{lem:listcomplete} would also imply incompleteness in $(M, \Omega^2 g)$.
    So assume $\gamma$ is complete in $(M, g)$.  

    By Lemma~\ref{lem:cases} we know that $\Omega \sim 1/\abs{\lambda}$ as either $\lambda \rightarrow
    -\infty$ or $\lambda \rightarrow \infty$. Assume without loss of generality the latter case. 
    The affine parameter $\tilde{\lambda}$ of $\gamma$ in $(M, \Omega^2 g)$ is
    related to $\lambda$ by \eqref{eq:tildel}
    with the same conventions. 
    Since there must exist constants $C$ and $\epsilon>0$ such that
    \begin{equation}
        \Omega\left(\lambda\right) < \frac{ C }{ \lambda }, \qquad \forall \lambda > \frac{ 1 }{ \epsilon },
    \end{equation}
    one can write
    \begin{equation}
            \lim_{\lambda\to\infty} \tilde{\lambda} = \int_{0}^{\infty} \dd\lambda \; \Omega(\lambda)^2 = 
        \int_{0}^{1/\epsilon} \dd\lambda \; \Omega(\lambda)^2 + \int_{1/\epsilon}^{\infty} \dd\lambda \; \Omega(\lambda)^2.
    \end{equation}
    But the two terms on the right-hand side are finite, since $\Omega\in C^2(M)$ and
    \begin{equation}
        \int_{1/\epsilon}^{\infty} \dd\lambda \; \Omega(\lambda)^2 < C^2 \int_{1/\epsilon}^{\infty}
        \frac{ \dd\lambda }{ \lambda^2 } = C^2 \epsilon.
    \end{equation}
    Hence $\gamma$ is incomplete in $(M, \Omega^2 g)$.
    
    Finally, note that the boundedness of $\Omega$ was only used to deduce that $\gamma$ was complete in $(M, g)$. If $(M, g)$ is geodesically complete, the assumption of $\Omega$ being bounded can be removed.
\end{proof}

\thmincomplete*
\begin{proof}
Boundedness of $\omega$ implies boundedness of $\Omega$ and $1/\Omega$.
By boundedness of $\Omega$ and Theorem~\ref{thm:compVSncc}, $(M, \Omega^2 g)$ is null-geodesically incomplete. 
By boundedness of $1/\Omega$ and Lemma~\ref{lem:listcomplete}, this means $(M, g)$ is geodesically incomplete. 
\end{proof}

\thmads*
\begin{proof}
Assume $(M, \Omega^2 g)$ obeys the NCC. By definition of AlAdS \cite{Fischetti:2012rd} there exists a function $z$ on $M$ such that $z^2 g$ extends smoothly to a metric on the conformal boundary $\partial M$, with
$z|_{\partial M} = 0$, $\dd z|_{\partial M} \neq 0$ and $z|_{\interior M}>0$.
By Lemma \ref{lem:cases}, on each boundary-anchored null geodesic $\gamma$,
if $\Omega$ is not constant on $\gamma$, then $\Omega \sim c/|\lambda|$ as either $\lambda
\rightarrow \infty$ or $\lambda \rightarrow -\infty$ for some $c>0$. Assume without
loss of generality that it vanishes as $\lambda \rightarrow \infty$.
In Poincar\'e coordinates of pure AdS, one can show that an affine null geodesic near the boundary has coordinate $z(\lambda) = \frac{ \alpha }{ |\lambda| }$ for some
$\alpha$ not depending on $\lambda$. The same scaling will hold asymptotically at large $\lambda$ in any
AlAdS spacetime, as can be seen by using a Fefferman-Graham expansion near $\partial M$ \cite{fefferman_1985}. Hence we have the asymptotic scaling $\Omega \sim
z$ as $\lambda \rightarrow \infty$ along $\gamma$.
In order for $(M, \Omega^2 g)$ to be AlAdS would need a function $\hat{z}$ such that $\hat{z}^2 \hat{g}=\hat{z}^2 \Omega^2 g$ extends smoothly to a metric
on $\partial M$. This requires the asymptotic scaling $\hat{z}\Omega \sim
z$ with some nonzero coefficient of proportionality or, equivalently, that $\hat{z}(\lambda)$ asymptotes to a nonzero constant as
$\lambda \rightarrow \infty$. However, this contradicts the requirement $\hat{z}|_{\partial M} = 0$
of a defining function, so $(M, \Omega^2 g)$ cannot be asymptotically AdS.
It follows that if $(M, \Omega^2 g)$ is AlAdS, then $\Omega$ is constant along every boundary-anchored null geodesic $\gamma$.
Since each $p\in A$ has a null geodesic ending on the boundary and $\Omega$ has a fixed value there, $\Omega$ is constant on $A$.
\end{proof}

\thmAF*
\begin{proof}

By definition of asymptotically flat \cite{Wald}, there exists a function $z$ on $M$ such that $z^2 g$ extends smoothly to a metric $\partial M = \mathscr{I}^+ \cup \mathscr{I}^-$, with $z|_{\partial M} = 0$, $\dd z|_{\partial M} \neq 0$ and $z|_{\interior M}>0$. 
Now, any asymptotically flat space has an asymptotic coordinate system near $\mathscr{I}^\pm$ given by \cite{Wald}
\begin{equation}
\dd s^2 = \frac{1}{z^2}(\dd u \, \dd z + g_{\mathbb{S}^{D-2}}) + \mathcal{O}(z^{-1}),
\end{equation}
where $g_{\mathbb{S}^{D-2}}$ is the metric of the unit $(D-2)$--sphere.
To leading order in $z$ and $\lambda$, the solution of the null geodesic equation for the $z$--coordinate is $z(\lambda) = c/|\lambda|$
where $c>0$ for a boundary-anchored geodesic.
Notice that $z$ is defined only up to a rescaling $z \rightarrow f(z) z$ by a strictly positive function $f \in C^2(M\cup \partial M)$. If $(M, \Omega^2 g)$ is asymptotically flat, then there is a function $\hat{z}$ satisfying all properties above in the spacetime $(M, \Omega^2 g)$. Since $z$ is uniquely defined up to any such $f$, so is $\hat{z}$, i.e., $\hat{z} = f(z) z/\Omega$.
Assume now $(M, \Omega^2 g)$ obeys the NCC, and that there exists a boundary-anchored null geodesic $\gamma $ on which $\Omega$ is somewhere non-constant. Then $\Omega\sim \mathcal{O}(1/|\lambda|)$ either as $\lambda \rightarrow \infty$ or $\lambda \rightarrow -\infty$. Assume without loss of generality the former. At large $\lambda$ we then have $\Omega \sim \mathcal{O}(1/|\lambda|) \sim \mathcal{O}(z)$ with a nonzero coefficient. This implies $\hat{z}|_{\partial M} \neq 0$, yielding a contradiction. Thus $(M, \Omega^2 g)$ is asymptotically flat only if $\Omega$ is constant along every boundary-anchored null geodesic. Since each $p\in A$ has a null geodesic ending on the boundary and $\Omega$ has a fixed value there, $\Omega$ is constant on $A$.
\end{proof}

\thmbtz*
\begin{proof}
Firstly, note the curious property that any spherically-symmetric BTZ black hole has circular photon orbits at every possible radius outside the horizon \cite{Cruz:1994ir}. In other words, there exist complete null geodesics at every fixed radius in both exteriors along which $\Omega$ must be constant by Lemma \ref{lem:cases}. This means an NCC-preserving Weyl transformation on BTZ can only possibly have radial dependence outside the horizon.
Finally, if $\lim_{r \rightarrow \infty}\omega(r)$ and $\lim_{r\rightarrow \infty}\omega'(r)$ exist, then \eqref{eq:omegabound} implies $\omega$ is monotonically non-decreasing along a radial null geodesic moving outwards, meaning that $\omega'(r)\ge0$ outside the horizon. 
\end{proof}

\section{Discussion}
\label{sec:discuss}

\paragraph{Relation to previous work:}
In this brief note we have shown how the combination of causal structure and the null curvature condition significantly constrain the conformal factor of the metric. To our knowledge, these results are new and have not appeared elsewhere in the literature.
There are nonetheless some recent results in \cite{Galloway:2021csq} of a similar flavor. In their work, it is shown that conformal rescalings of a class of generalized Robertson-Walker spacetimes must introduce geodesic incompleteness when requiring the strong energy condition. This fits well with the pattern explored in our paper, and shows that useful results can be obtained also for spacetimes with null curvature. 

We also note that the interaction between conformal transformations, geodesic completeness, and global causality conditions (without the consideration of energy conditions) were studied in \cite{Clarke:1970pi, Beem1976}.

\paragraph{Metric reconstruction in AdS/CFT with light-cone cuts:} In \cite{Engelhardt:2016wgb, Engelhardt:2016crc, Hernandez-Cuenca:2020ppu}, it was shown that knowledge of $(D+2)$-point correlators of local CFT operators on the conformal boundary $\partial M$ of a $D$-dimensional AlAdS spacetime is enough to reconstruct the conformal metric in a large portion of the causal wedge $C_W=J^+(\partial M) \cap J^-(\partial M)$. This method of bulk reconstruction requires no knowledge of the bulk equations of motion, allows for obtaining the internal dimensions of the bulk spacetime which trivialize asymptotically \cite{Hernandez-Cuenca:2020ppu}, and currently constitutes the only constructive approach to partially obtaining general bulk metrics from CFT data.\footnote{The reconstruction of bulk operators is a different problem which generally assumes knowledge of a background bulk geometry in the first place -- see \cite{Hamilton:2005ju} for more details. Of course, operator reconstruction includes perturbative metric reconstruction as a special case, but this still requires a reference background on top of which to consider graviton perturbations.}
The downside is that light-cone cuts recover only the causal structure of the spacetime, i.e., the conformal class of its metric. But if the bulk has vanishing null curvature, our results provide the missing conformal factor, since from Theorem \ref{thm:thmads} only a single choice of conformal factor will lead to a NCC-respecting AlAdS geometry. Note that our method only reconstructs the conformal factor on a subset of $C_W$ which is probed by boundary-anchored null geodesics. This constitutes a subset of the region where light-cone cuts reconstruct the conformal metric, but as long as $D\geq 4$ we expect both regions to be large subsets of $C_W$. For $D=3$ this is not necessarily true due to the fact that gravitational interactions do not fall off with distance. In the case of the BTZ black hole this causes all non-spherical null geodesics to be incomplete, leading to the failure of light-cone cuts. Our methods fail at determining completely the conformal factor anywhere in $C_W$ in the BTZ geometry for the same reason.

\paragraph{Future directions:} Instead of analyzing complete null geodesics, one could study the effect of Weyl transformations on either complete timelike geodesics or complete spatial slices. In these cases the more useful energy condition would likely be the weak energy condition.\footnote{We could also consider the dominant or strong energy conditions, but the latter is not physically well motivated and the former is stronger than the weak one.} Carrying out the analysis on a spatial slice would require PDE analysis, but could on the other hand have the potential to constrain the conformal factor also in black hole interiors. 

In cases where the bulk has null curvature, we would get constraints on the conformal factor through~\eqref{eq:ubound} (and the less constraining but more explicit~\eqref{eq:ancc2}). It would be interesting to study more quantitatively how much \eqref{eq:ubound} constrains geometry, and thus how good the approximate bulk reconstruction is when null curvature is present.

The geodesic NCC in \eqref{eq:secschw} can be recast into the bound $S(f_\gamma) \leq \frac{2}{D-2}R_{kk}$, where $S$ is the Schwarzian derivative and $f_\gamma'(\lambda) = e^{2\omega(\lambda)}$. It would be interesting to investigate if the well-studied properties of the Schwarzian derivative would allow us to derive rigidity results when $R_{kk}\neq 0$. It could also be useful to try to study the NCC directly at the level of \eqref{eq:expncc}, i.e., more globally rather than restricting to single null geodesics as we did. A possible step forward in this direction is to re-interpret \eqref{eq:expncc} as the NEC for matter fields in a Brans-Dicke theory with $\omega$ playing the role of the dilatonic scalar \cite{Brans:1961sx}. This way, the NCC could be studied as the physicality condition that it be possible to add matter fields to Brans-Dicke such that solutions to the equations of motion obey the NEC.

Finally, in Theorem \ref{thm:btz}, we explored the implications of our results on BTZ, deriving a condition on the conformal factor for spacetimes in its conformal class to obey the NCC. This low-dimensional example illustrated the usefulness of our methods even in the absence of boundary-anchored null geodesics.
Higher-dimensional spacetimes provide an arena where Theorems \ref{thm:thmads} and \ref{thm:thmaf} would play an important role in large regions of spacetime, but where we also expect there to be regions in $C_W$ which are not probed by boundary-anchored null geodesics, particularly near black holes. Hence it would be interesting to explore the implications of our methods in these regions of higher-dimensional black hole spacetimes \cite{Emparan:2008eg,Horowitz:2012nnc}.

\acknowledgments
It is a pleasure to thank Netta Engelhardt for collaboration in the early stages of this project, Gary Horowitz for helpful comments on various parts of the paper, and Gregory Galloway and Elena Giorgi for guidance on the existing mathematical literature.
\AA{}F is supported in part by NSF grant PHY-2011905 and the MIT department of physics, and in part by an Aker Scholarship.
SHC is supported by NSF grant PHY-1801805 and the University of California, Santa Barbara.


\addcontentsline{toc}{section}{References}
\bibliographystyle{JHEP}
\bibliography{references.bib}

\end{document}